\documentclass[onecolumn, 12pt]{IEEEtran}
\usepackage{amsmath}
\usepackage{amssymb}
\usepackage{amsfonts}
\usepackage{graphicx}
\usepackage{epsfig}
\usepackage{subfigure}
\usepackage{psfrag}
\usepackage{cite}
\usepackage{latexsym}
\usepackage{url}
\usepackage{color}

\linespread{1.33}
\begin{document}
\title{Multiuser MIMO Wireless Energy Transfer With Coexisting Opportunistic Communication
\footnote{The authors are with the Department of Electrical and Computer Engineering,
National University of Singapore, Singapore (e-mail: \{elexjie, bsz, elezhang\}@nus.edu.sg).}}

\author{Jie~Xu,~Suzhi~Bi,~and~Rui~Zhang}

\maketitle
\begin{abstract}
This letter considers spectrum sharing between a primary multiuser multiple-input multiple-output (MIMO) wireless energy transfer (WET) system and a coexisting secondary point-to-point MIMO wireless information transmission (WIT) system, where WET generates interference to WIT and degrades its throughput performance. We show that due to the interference, the WIT system suffers from a loss of the degrees of freedom (DoF) proportional to the number of energy beams sent by the energy transmitter (ET), which, in general, needs to be larger than one in order to optimize the multiuser WET with user fairness consideration. To minimize the DoF loss in WIT, we further propose a new single-beam energy transmission scheme based on the principle of time sharing, where the ET transmits one of the optimal energy beams at each time. This new scheme achieves the same optimal performance for the WET system, and minimizes the impact of its interference to the WIT system.
\end{abstract}
\begin{keywords}
Spectrum sharing, coexisting wireless energy and information transfer, one-way interference, multiple-input multiple-output (MIMO), degrees of freedom (DoF).
\end{keywords}

\setlength{\baselineskip}{1.28\baselineskip}
\newtheorem{definition}{\underline{Definition}}[section]
\newtheorem{fact}{Fact}
\newtheorem{assumption}{Assumption}
\newtheorem{theorem}{\underline{Theorem}}[section]
\newtheorem{lemma}{\underline{Lemma}}[section]
\newtheorem{corollary}{\underline{Corollary}}[section]
\newtheorem{proposition}{\underline{Proposition}}[section]
\newtheorem{example}{\underline{Example}}[section]
\newtheorem{remark}{\underline{Remark}}[section]
\newtheorem{algorithm}{\underline{Algorithm}}[section]
\newcommand{\mv}[1]{\mbox{\boldmath{$ #1 $}}}

\section{Introduction}
Radio frequency (RF) signal enabled wireless energy transfer (WET) has become an attractive technology to provide convenient and perpetual power supply to future energy-constrained wireless networks \cite{Bi:Overview}. The natural integration of WET and conventional wireless information transmission (WIT) systems has spurred many new wireless design paradigms that jointly investigate WET and WIT. For example, simultaneous wireless information and power transfer (SWIPT) (see, e.g., \cite{ZhangHo2013}) and wireless powered communication network (WPCN) (see, e.g., \cite{JuZhang2014}) have been proposed to enable simultaneous RF energy harvesting and information reception/transmission for wireless devices.

Instead of considering fully coordinated WET and WIT within one system as in SWIPT and WPCN, in this letter, we study a new yet practical scenario with two WET and WIT systems operating separately in the same geographical area. In particular, we consider spectrum sharing between the two systems for improving the spectrum utilization efficiency. In such a scenario, only {\it one-way} interference occurs from WET to WIT, since the radio signal from WIT to WET is a useful energy source for energy receivers' (ERs') RF energy harvesting (rather than undesired interference). This is in sharp contrast to the conventional {\it two-way} interference in spectrum sharing between different WIT systems \cite{RZhang2008Exploiting}. In this one-way interference setup, the WIT system needs to communicate opportunistically  subject to the interference from the WET system.

In this letter, we investigate the optimal energy and information signals design for WET and WIT coexisting within the same spectrum. For the purpose of exposition, we consider a primary multiuser WET and a secondary point-to-point WIT systems, where multi-antenna or multiple-input multiple-output (MIMO) technique is exploited at both systems for improving energy transfer efficiency and communication data rate, respectively. Under this setup, we first consider the WET system, which optimizes the transmit beamforming at energy transmitter (ET) to maximize the transferred energy to all ERs subject to energy fairness constraints among them. It is revealed that in general more than one transmit energy beams are needed to achieve the optimality. Due to the interference from the ET, we then show that the WIT system suffers from a loss of the degrees of freedom (DoF) proportional to the number of energy beams sent by the ET, which, thus, is generally larger than one. To minimize the DoF loss in WIT, we propose a new single-beam energy transmission scheme based on the principle of time sharing, where the ET transmits one of the above optimal energy beams at each time. This new scheme minimizes the impact of its interference to the WIT system, and achieves the same optimal performance for the WET system as the optimal multi-beam scheme.

\section{System Model}

\begin{figure}
\centering
 \epsfxsize=1\linewidth
    \includegraphics[width=9.5cm]{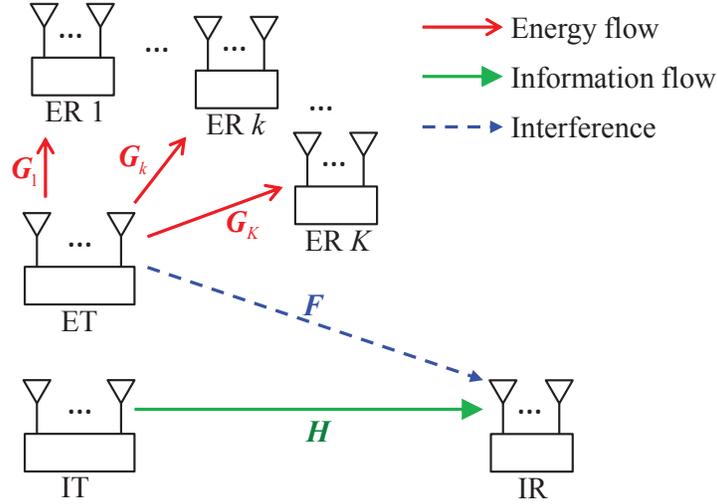}
\caption{Spectrum sharing between a (primary) multiuser MIMO WET and a (secondary) point-to-point MIMO WIT system with one-way interference from WET to WIT.}\label{fig:systemmodel}
\end{figure}

This letter considers a primary multiuser MIMO WET system and a coexisting secondary point-to-point MIMO WIT system as shown in Fig. \ref{fig:systemmodel}, where the two systems operate over the same transmit spectrum. There are one ET with $M_E$ antennas and $K$ ERs each with $N_E$ antennas in the WET system, as well as one information transmitter (IT) with $M_I$ antennas and one information receiver (IR) with $N_I$ antennas in the WIT system. We consider a quasi-static flat-fading channel model and a block-based energy/information transmission, where wireless channels remain constant over each transmission block with a length of $T > 0$. In addition, we assume perfect local channel state information (CSI) at the two systems, that is, the ET has the perfect CSI to all the ERs, while the IT and the IR accurately know the CSI between them.

First, we consider the energy/information transmission at the two systems. Let the transmit energy and information signals at the ET and the IT be denoted by $\mv x_E  \in \mathbb{C}^{M_E \times 1}$ and $\mv x_I \in \mathbb{C}^{M_I \times 1}$, and the corresponding transmit energy and information covariance matrices by $\mv S_E = \mathbb{E}\left(\mv x_E \mv x_E^H\right)$ and $\mv S_I = \mathbb{E}\left(\mv x_I \mv x_I^H\right)$, respectively. Note that given $\mv S_E$, $d_E = \mathrm{rank}(\mv S_E)$ in fact specifies the number of energy beams that are spatially transmitted \cite{LiuZhangChia}. In addition, we assume that the maximum transmit sum-power at the ET (the IT) is denoted by $P_E > 0$ ($P_I > 0$). Then we have $\mathbb{E}\left(\|\mv x_E\|^2\right) = \mathrm{tr}(\mv S_E) \le P_E$ and $\mathbb{E}\left(\|\mv x_I\|^2\right) = \mathrm{tr}(\mv S_I) \le P_I$.

Next, consider the energy harvesting at ERs. Due to the broadcast nature of radio signal, each ER  can harvest the energy carried by both the energy signal $\mv x_E$ from the ET and the information signal $\mv x_I$ from the IT. Because an IT often covers a large area (e.g., 100 meters radius) while an ET is only used for short range power transfer (e.g., a couple of meters range), in practice the distance from an ER to an IT is often much larger than that to an ET. As a result, the harvested energy from $\mv x_I$ is normally much weaker than that from $\mv x_E$, and thus could be safely omitted without compromising the performance. Let the MIMO channel matrix from the ET to each ER $k$ be denoted by $\mv G_k \in \mathbb{C}^{N_E \times M_E}, k \in \{1,\ldots,K\}$. Then the energy harvested by ER $k$ over the whole block is expressed as \cite{ZhangHo2013}
\begin{align}\label{eqn:energy}
Q_k\left(\mv S_E\right) =\eta T \mathrm{tr}\left(\mv G_k^H\mv G_k\mv S_E\right),
\end{align}
where $0<\eta\le 1$ denotes the energy harvesting efficiency at each ER. Since $\eta$ is a constant, we normalize it as $\eta=1$ in the sequel of this paper unless otherwise stated.

Finally, we consider the information reception at the IR. Let the MIMO channel matrices from the ET and the IT to the IR be denoted by $\mv F \in \mathbb{C}^{N_I \times M_E}$ and $\mv H \in \mathbb{C}^{N_I \times M_I}$, respectively. It is assumed that $\mv F$ and $\mv H$ are both of full-rank, i.e., $\mathrm{rank}(\mv F) = \min(N_I,M_E)$ and $\mathrm{rank}(\mv H) = \min(N_I,M_I)$, and all channels $\mv F$, $\mv H$, and $\mv G_k$'s are independently generated. Then the received signal at the IR is expressed by $\mv y = \mv H \mv x_I + \mv F \mv x_E + \mv n$, where $\mv H \mv x_I$ is the desired information signal sent from the IT, $\mv F \mv x_E$ is the co-channel interference caused by the energy signal transmitted from the ET, and $\mv n$ denotes the additive white Gaussian noise (AWGN) at the IR, which is a circularly symmetric complex Gaussian (CSCG) random vector with zero mean and covariance matrix $\sigma^2\mv I$, i.e., $\mv n \sim \mathcal{CN}(\mv 0,\sigma^2\mv I)$, with $\sigma^2 > 0$ denoting the noise power. Accordingly, the interference-plus-noise covariance matrix at the IR is given by
$\mathbb{E}\left((\mv F \mv x_E + \mv n)(\mv F \mv x_E + \mv n)^H\right) = \mv F\mv S_E\mv F^H + \sigma^2 \mv I.$ As a result, by assuming Gaussian signalling at the IT, the achievable rate at the IR (in bps/Hz) is given by
\begin{align}\label{eqn:info:rate}
R&\left(\mv S_E,\mv S_I\right) \nonumber \\&= \log_2 \det\left(\mv I + \left(\mv F\mv S_E\mv F^H + \sigma^2 \mv I\right)^{-1}\mv H \mv S_I \mv H^H\right).
\end{align}

\section{Optimal Design with Multi-Beam WET}\label{sec:multi_beam}

In this section, we study the optimal transmit signals design in the coexisting (primary) WET and (secondary) WIT systems. Here, we consider that the WET system is oblivious to the WIT system and designs the transmit energy signal at the ET independently (i.e. without the need to minimize the interference to the IR); while the WIT system adjusts the transmit information signal at the IT subject to the interference from the ET.

In the WET system, to balance between the efficiency and user fairness of energy transfer, we maximize the total energy transferred to all the $K$ ERs over the whole block, subject to the energy fairness constraint that is specified based on the concept of {\it energy-profile}, similar to the rate-profile concept proposed in \cite{Mohseni:RateProfile}. Mathematically, we formulate the following optimization problem with a particular energy-profile vector $\mv{\alpha} \triangleq (\alpha_1, \ldots, \alpha_K)^T$:
\begin{align*}
{\rm{(P1)}}:\max\limits_{\mv S_E, \Theta}~&\Theta\\
{\rm{s.t.}}~& T\mathrm{tr}\left(\mv G_k^H\mv G_k\mv S_E\right) \ge \alpha_k \Theta,~ \forall k\in\{1,\ldots,K\}\\
&\mv S_E \succeq \mv 0, \mathrm{tr}\left(\mv S_E\right) \le P_E,
\end{align*}
where $\alpha_k \ge 0, k\in\{1,\ldots,K\}$, denotes the target ratio of the $k$th ER's harvested energy to the total harvested energy by all ERs, given by $\Theta$, with $\sum_{k=1}^K \alpha_k = 1$. Note that $\mv{\alpha}$ is a parameter designed based on the energy requirements among different ERs. It can be shown that (P1) is a convex semi-definite program (SDP) \cite{ConvexOptimization}, and thus can be solved by standard convex optimization techniques such as CVX \cite{cvx}. Let the optimal solution to (P1) be denoted as $\mv S^*_E$ and $\Theta^*$. Then, we have the following two important properties for $\mv S^*_E$.
\begin{itemize}
  \item It holds that $\mathrm{tr}(\mv S^*_E) = P_E$, i.e., all the available transmit power should be used up by the ET to maximize the energy transferred to all the ERs.
  \item When the number of ERs $K$ becomes large, it follows that $d^*_E = \mathrm{rank}\left(\mv S^*_E\right) > 1$ in general \cite{LiuZhangChia}, i.e., more than one energy beams are required at the optimal solution for balancing the energy fairness among different ERs.
\end{itemize}

Next, given the transmit energy covariance matrix $\mv S_E^*$, we design the transmit information covariance matrix $\mv S_I$ at the IT to maximize the achievable rate at the IR, given by $R(\mv S_E^*,\mv S_I)$ in (\ref{eqn:info:rate}). Accordingly, this problem is formulated as
\begin{align*}
(\mathrm{P2}):~\max_{\mv S_I}~&R(\mv S_E^*,\mv S_I)\\
{\rm s.t.}~& \mv S_I \succeq \mv 0,~ \mathrm{tr}(\mv S_I) \le P_I.
\end{align*}
It is evident that problem (P2) is equivalent to the conventional rate maximization problem for a point-to-point MIMO channel in \cite{Telatar1999}, by considering $\left(\mv F\mv S_E^*\mv F^H + \sigma^2 \mv I\right)^{-1/2}\mv H$ as the equivalent MIMO channel matrix in (\ref{eqn:info:rate}). Then, the optimal solution to (P2), denoted by $\mv S_I^*$, can be obtained by performing singular value decomposition (SVD) on $\left(\mv F\mv S_E^*\mv F^H + \sigma^2 \mv I\right)^{-1/2}\mv H$ together with a water-filling power allocation \cite{Telatar1999}. Note that to practically obtain such an optimal solution, the IT requires to know the interference-plus-noise covariance matrix $\mv F\mv S_E^*\mv F^H + \sigma^2 \mv I$, which can be practically estimated by the IR and sent back to the ET. Then, the maximum achievable rate of the WIT system is given by $R(\mv S_E^{*},\mv S^{*}_{I})$.

Now, it is interesting to analyze $R(\mv S_E^{*},\mv S^{*}_{I})$ to show the impact from the WET system (accordingly, the resulted one-way interference) to the throughput performance of the coexisting WIT system. We are particularly interested in the pre-log factor of the achievable rate (also known as the DoF or the multiplexing gain) of the WIT system in the high signal-to-noise ratio (SNR) regime with $P_I \to\infty$ and $P_E = \alpha P_I \to\infty$.
\begin{proposition}\label{proposition:DoFl}
As $P_I \to\infty$ and $P_E = \alpha P_I\to\infty$, it follows that ${R({\mv S}_E^*,{\mv S}_I^*)}/{\log_2(P_I)} = \min(M_I,\max(N_I - d_E^*,0))$ with $d_E^* = \mathrm{rank}(\mv S_E^*)$ being the number of energy beams sent by the ET.
\end{proposition}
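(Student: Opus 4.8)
The plan is to analyze the achievable rate $R(\mv S_E^*,\mv S_I^*)$ in the high-SNR limit by first fixing the structure of the effective channel. Recall that $\mv S_I^*$ is the water-filling solution over the effective channel $\mv A \triangleq \left(\mv F\mv S_E^*\mv F^H + \sigma^2\mv I\right)^{-1/2}\mv H$, so that $R(\mv S_E^*,\mv S_I^*) = \sum_i \log_2\left(1 + p_i \lambda_i\right)$, where $\lambda_i$ are the squared singular values of $\mv A$ and $p_i$ is the water-filling power allocation with $\sum_i p_i \le P_I$. The DoF is then the number of $\lambda_i$ that stay bounded away from zero (i.e.\ do not vanish) as $P_E = \alpha P_I \to\infty$, since only these modes carry power growing proportionally with $P_I$ and hence contribute a $\log_2(P_I)$ term; modes with $\lambda_i \to 0$ fast enough contribute $o(\log_2 P_I)$.

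The key step is therefore to characterize $\mathrm{rank}$ and the asymptotic behavior of the spectrum of $\mv A^H\mv A = \mv H^H\left(\mv F\mv S_E^*\mv F^H + \sigma^2\mv I\right)^{-1}\mv H$ as $P_E\to\infty$. First I would write $\mv S_E^* = P_E \tilde{\mv S}_E$ where $\tilde{\mv S}_E$ has fixed rank $d_E^*$ and fixed trace $1$ (using the property $\mathrm{tr}(\mv S_E^*) = P_E$, and that the normalized optimal covariance is essentially independent of $P_E$ for the scale-invariant problem (P1)). Then $\mv F\mv S_E^*\mv F^H + \sigma^2\mv I = P_E \mv F\tilde{\mv S}_E\mv F^H + \sigma^2\mv I$, where $\mv F\tilde{\mv S}_E\mv F^H$ is positive semidefinite of rank $\min(d_E^*, N_I)$ — equal to $d_E^*$ when $d_E^* \le N_I$, since $\mv F$ has full rank. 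As $P_E\to\infty$, the inverse $\left(P_E\mv F\tilde{\mv S}_E\mv F^H + \sigma^2\mv I\right)^{-1}$ converges to the projection $\mv\Pi$ onto the orthogonal complement of the column space of $\mv F\tilde{\mv S}_E\mv F^H$ (scaled by $1/\sigma^2$), a subspace of dimension $\max(N_I - d_E^*, 0)$, while the complementary modes decay like $1/P_E$. Hence $\mv H^H\left(\mv F\mv S_E^*\mv F^H + \sigma^2\mv I\right)^{-1}\mv H \to \sigma^{-2}\mv H^H\mv\Pi\mv H$, whose rank is $\min(M_I, \max(N_I - d_E^*, 0))$ generically — here I would invoke that $\mv H$ is full-rank and independently generated from $\mv F$ and the $\mv G_k$'s, so $\mv H$ does not align with the vanishing subspace; this gives exactly $\min(M_I, \max(N_I - d_E^*,0))$ eigenvalues converging to strictly positive constants, with the rest $\Theta(1/P_E)$.

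Finally I would assemble the DoF count: for the $\min(M_I,\max(N_I-d_E^*,0))$ "good" modes, water-filling assigns $p_i = \Theta(P_I)$ power, so each contributes $\log_2(P_I) + O(1)$; for the remaining modes $\lambda_i = \Theta(1/P_E) = \Theta(1/P_I)$, so even with $p_i = O(P_I)$ the term $\log_2(1 + p_i\lambda_i) = O(1)$, contributing nothing to the pre-log. Dividing by $\log_2(P_I)$ and letting $P_I\to\infty$ yields the claimed limit $\min(M_I,\max(N_I - d_E^*,0))$. The main obstacle I anticipate is making the genericity argument fully rigorous — i.e.\ arguing that the column space of $\mv H$ intersects the $\max(N_I - d_E^*,0)$-dimensional kernel-complement in general position so that $\mathrm{rank}(\mv H^H\mv\Pi\mv H) = \min(M_I, \max(N_I - d_E^*,0))$ rather than something smaller; this relies on the stated independence and full-rank assumptions on $\mv F$ and $\mv H$ and on $\mv S_E^*$ (through $\mv G_k$'s) being statistically independent of $\mv H$. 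A secondary technical point is justifying that the normalized $\tilde{\mv S}_E$ (and thus $d_E^*$) can be taken fixed as $P_E$ scales, which follows from the positive homogeneity of (P1) in $(\mv S_E,\Theta)$ under joint scaling.
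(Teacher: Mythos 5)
Your proposal is correct and follows essentially the same route as the paper's proof: the interference $\mv F\mv S_E^*\mv F^H$ has rank $\min(N_I,d_E^*)$ because $\mv F$ is full-rank and independent of $\mv S_E^*$, leaving a $\max(N_I-d_E^*,0)$-dimensional interference-free receive subspace, which combined with the $M_I$ transmit dimensions and the full-rank, independently generated $\mv H$ gives the pre-log $\min(M_I,\max(N_I-d_E^*,0))$. The paper states this as a short dimension-counting argument, whereas you make the same idea explicit via the limiting projector of $\left(P_E\mv F\tilde{\mv S}_E\mv F^H+\sigma^2\mv I\right)^{-1}$ and the water-filling mode asymptotics; this is added rigor rather than a different method.
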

\begin{proof}
With $d_E^*$ energy beams sent by the ET, its resulting interference signal to the IR satisfies $\mathrm{rank}\left(\mv F\mv S_E^*\mv F^H\right) = \min(N_I,d_E^*)$, since $\mv F$ is of full-rank and $\mv S_E^*$ (or $\mv G_k$'s) and $\mv F$ are independent. As a result, there are in total $N_I - d_E^*$ linearly independent basis vectors in the interference-free signal space, provided that the IR has $N_I\geq d_E^*$ receive antennas; therefore, the IR can support a total of $\max(N_I - d_E^*,0)$ DoF \cite{CadambeJafar2008}. By using this together with the fact that the IT has $M_I$ transmit antennas, we have that the DoF of the WIT system is indeed $\min(M_I,\max(N_I - d_E^*,0))$, provided that the corresponding channel matrix $\mv H$ is also of full-rank and is independent of $\mv S_E^*$ and $\mv F$. This completes the proof of this proposition.
\end{proof}

It is observed from Proposition \ref{proposition:DoFl} that the DoF of the WIT system, i.e., $\min(M_I,\max(N_I - d_E^*,0))$, critically depends on the number of energy beams $d_E^*$ in the WET system. To achieve the maximum DoF for WIT, the ET should minimize the number of energy beams transmitted in space.

\section{Alternative Design with Single-Beam WET}\label{sec:single_beam}

In this section, we propose an alternative single-beam WET scheme by the ET sending only one energy beam with adjustable weights over time, so as to achieve the same optimal performance in the WET system and the maximum DoF in the WIT system at the same time.


Specifically, this new single-beam WET scheme is designed based on the time-sharing among the $d_E^*$ optimal energy beams obtained by solving (P1), which are specified by the optimal transmit energy covariance matrix $\mv S^*_E$. Let $\gamma_1^*, \ldots, \gamma_{d_E^*}^*$ denote the $d^*_E$ strictly positive eigenvalues of $\mv S^*_E$, and ${\mv w}_1^*, \ldots, {\mv w}^*_{d_E^*}$ denote their corresponding eigenvectors, where $\mv S_E^* = \sum_{i=1}^{d_E^*}\gamma_i^*{\mv w}^{*}_{i}{\mv w}^{*H}_{i}$ and $\sum_{i=1}^{d_E^*}\gamma_i^* = P_E$ (due to  $\mathrm{tr}(\mv S^*_E) = P_E$). Note that $\{{\mv w}^*_i\}_{i=1}^{d_E^*}$ are the $d_E^*$ optimal energy beams corresponding to $\mv S^*_E$. Then, the ET divides the whole transmission block into $d_E^*$ sub-blocks each having a length of $t_i = {\gamma_i^*T}/P_E, i\in\{1,\ldots,d^*_E\}$, where $\sum_{i=1}^{d^*_E} t_i = T$. Over each sub-block $i$, the ET uses the full transmit power $P_E$ to send the $i$th optimal energy beam (i.e., ${\mv w}_i^*$), with the transmit energy covariance matrix given by $\mv S_{E,i}^\star = P_E {\mv w}_i^* {\mv w}_i^{*H}$, where $\mathrm{rank}(\mv S_{E,i}^\star) = 1, i\in\{1,\ldots,d_E^*\}$. Therefore, the harvested energy by ER $k$ at the $i$th sub-block is expressed as
\begin{align}\label{eqn:energy:A}
Q_{k,i}&(\{\mv S_{E,i}^\star\}) = t_i\mathrm{tr}\left(\mv G_k^H\mv G_k\mv S_{E,i}^\star\right) \nonumber\\
&= {\gamma_i^*T} \mathrm{tr}\left(\mv G_k^H\mv G_k{\mv w}_i^* {\mv w}_i^{*H}\right), k\in\{1,\ldots,K\}.
\end{align}
By combining the $d_E^*$ sub-blocks, the total harvested energy by ER $k$ over the whole block is $\sum_{i=1}^{d_E^*} Q_{k,i}(\{\mv S_{E,i}^\star\})$. Then, we have the following proposition.
\begin{proposition}\label{proposition:4.1}
The alternative single-beam WET scheme achieves the same harvested energy at each ER $k$ over the whole block as the optimal multi-beam WET scheme with $\mv S_E^*$, i.e., $\sum_{i=1}^{d_E^*} Q_{k,i}(\{\mv S_{E,i}^\star\})=Q_k(\mv S_{E}^*), \forall k\in\{1,\ldots,K\}$.
\end{proposition}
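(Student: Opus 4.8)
The plan is to verify the claimed equality by direct computation, exploiting the eigendecomposition of $\mv S_E^*$ and the linearity of the trace. The crucial observation is that the sub-block durations have been chosen precisely so that $t_i P_E = \gamma_i^* T$ for each $i\in\{1,\ldots,d_E^*\}$; this is what makes the per-sub-block energy contributions align with the spectral weights of $\mv S_E^*$.

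First I would start from the expression (\ref{eqn:energy:A}) for the energy harvested by ER $k$ in sub-block $i$, namely $Q_{k,i}(\{\mv S_{E,i}^\star\}) = \gamma_i^* T\,\mathrm{tr}\!\left(\mv G_k^H\mv G_k{\mv w}_i^* {\mv w}_i^{*H}\right)$, and sum it over $i=1,\ldots,d_E^*$. Pulling the common factor $T$ out and using linearity of the trace in its argument, the sum becomes $T\,\mathrm{tr}\!\left(\mv G_k^H\mv G_k \sum_{i=1}^{d_E^*}\gamma_i^*{\mv w}_i^* {\mv w}_i^{*H}\right)$.

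Next I would recognize the inner sum as exactly the eigendecomposition $\mv S_E^* = \sum_{i=1}^{d_E^*}\gamma_i^*{\mv w}_i^* {\mv w}_i^{*H}$ given in the construction of the single-beam scheme, so the expression collapses to $T\,\mathrm{tr}\!\left(\mv G_k^H\mv G_k\mv S_E^*\right)$. Comparing with (\ref{eqn:energy}) (recalling the normalization $\eta=1$), this is precisely $Q_k(\mv S_E^*)$, which establishes $\sum_{i=1}^{d_E^*} Q_{k,i}(\{\mv S_{E,i}^\star\}) = Q_k(\mv S_E^*)$ for every $k$.

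There is no real obstacle here: the argument is a one-line manipulation once the time-sharing parameters $t_i=\gamma_i^*T/P_E$ are plugged in, and it holds identically for each ER $k$ without any channel assumptions. If anything, the only point worth stating explicitly is that the construction also respects the power constraint, since $\sum_{i=1}^{d_E^*}t_i = (T/P_E)\sum_{i=1}^{d_E^*}\gamma_i^* = T$ by $\mathrm{tr}(\mv S_E^*)=P_E$, so the scheme is feasible over a block of length $T$; but this is already noted in the text preceding the proposition.
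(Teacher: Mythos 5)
Your proof is correct and follows exactly the route the paper intends (the paper simply omits the details): substitute $t_i=\gamma_i^*T/P_E$ into (\ref{eqn:energy:A}), use linearity of the trace, and recognize the eigendecomposition $\mv S_E^*=\sum_{i=1}^{d_E^*}\gamma_i^*{\mv w}_i^*{\mv w}_i^{*H}$ to recover (\ref{eqn:energy}). Your added remark that $\sum_i t_i = T$ follows from $\mathrm{tr}(\mv S_E^*)=P_E$ is a nice explicit feasibility check, but otherwise there is nothing to change.
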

\begin{proof}
This proposition  can be proved via simple manipulations by using $\mv S_E^* = \sum_{i=1}^{d_E^*} \gamma_i^*{\mv w}_i^{*} {\mv w}_i^{*H}$ together with (\ref{eqn:energy}) and (\ref{eqn:energy:A}). Thus, the details are omitted.
\end{proof}

Proposition \ref{proposition:4.1} is somewhat surprising, but can be intuitively explained as follows. Note that the new single-beam WET scheme indeed employs the same $d_E^*$ energy beams (via time sharing) as those in the optimal multi-beam WET scheme (via spatial multiplexing). Since the harvested power at each ER (see (\ref{eqn:energy}) and (\ref{eqn:energy:A})) is a linear function with respect to the transmit energy covariance matrix at the ET, the two schemes achieve the same optimal WET performance for all the ERs.

Next, we consider the WIT system. Since the one-way interference from the WET system varies over sub-blocks (due to the different transmit energy covariance matrix employed at each sub-block), the WIT system should correspondingly adjust the information signals at the IT for each of the $d_E^*$ sub-blocks. For convenience, we assume that the information signal adjustment at the IT is perfectly synchronized with the energy signal adaptation at the ET. Let $\mv{S}_{I, i}$ denote the transmit information covariance matrix at the IT in the $i$th sub-block, $i\in\{1,\ldots,d_E^*\}$. Then the achievable rate of the WIT system (in bps/Hz) over the $i$th sub-block is expressed as $R(\mv S^{\star}_{E,i},\mv S_{I,i})$ in (\ref{eqn:info:rate}), and the average rate over the whole block is given by $\frac{1}{T} \sum_{i=1}^{d_E^*} t_i R(\mv S^{\star}_{E,i},\mv S_{I,i})$. As a result, the average rate optimization problem for the WIT system is formulated as
\begin{align}
{\mathrm{(P3)}}: \max_{\{ \mv S_{I,i}\}}~&\frac{1}{T} \sum_{i=1}^{d_E^*} t_i R(\mv S^{\star}_{E,i},\mv S_{I,i}) \nonumber\\
\mathrm{s.t.}~& \mv S_{I,i} \succeq \mv 0,~ \mathrm{tr}(\mv S_{I,i}) \le P_I, \forall i \in \{1,\ldots,d_E^{*}\}. \nonumber
\end{align}
Problem (P3) can be decomposed into $d_E^*$ sub-problems each for one sub-block, which can then be solved similarly as (P2). Let the optimal solution to (P3) be denoted by $\{\mv S_{I,i}^\star\}$. Accordingly, we denote the maximum average rate of the WIT system over the whole block as $\frac{1}{T} \sum_{i=1}^{d_E^*} t_i R(\mv S^{\star}_{E,i},\mv S_{I,i}^\star)$. We have the following proposition.
\begin{proposition}\label{proposition:DoF:SB}
As $P_I \to\infty$ and $P_E = \alpha P_I \to\infty$,  it follows that  $\frac{1}{T} \sum_{i=1}^{d_E^*} t_i R(\mv S^{\star}_{E,i},\mv S_{I,i}^\star)/\log_2(P_I) = \min(M_I,\max(N_I - 1,0))$.
\end{proposition}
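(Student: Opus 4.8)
The plan is to apply Proposition \ref{proposition:DoFl} separately to each of the $d_E^*$ sub-blocks and then average. In sub-block $i$, the ET transmits the single energy beam $\mv w_i^*$, so the interference covariance matrix is $\mv F \mv S_{E,i}^\star \mv F^H + \sigma^2 \mv I$ with $\mathrm{rank}(\mv S_{E,i}^\star) = 1$. Since $\mv F$ is full-rank and independent of $\mv S_E^*$ (hence of each $\mv w_i^*$), and $\mv H$ is full-rank and independent of both, the same argument as in the proof of Proposition \ref{proposition:DoFl} — counting the interference-free receive dimensions at the IR — gives that the DoF over sub-block $i$ is $\min(M_I, \max(N_I - 1, 0))$, i.e.
\begin{align*}
\lim_{P_I \to \infty} \frac{R(\mv S_{E,i}^\star, \mv S_{I,i}^\star)}{\log_2(P_I)} = \min(M_I, \max(N_I - 1, 0)), \quad i \in \{1, \ldots, d_E^*\}.
\end{align*}

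Next I would combine the per-sub-block limits. Writing $t_i / T = \gamma_i^* / P_E$ with $\sum_{i=1}^{d_E^*} t_i = T$, the average rate is the convex combination $\frac{1}{T}\sum_{i=1}^{d_E^*} t_i R(\mv S_{E,i}^\star, \mv S_{I,i}^\star)$. Dividing by $\log_2(P_I)$ and taking the limit — the weights $t_i/T$ are fixed constants independent of $P_I$, so the limit passes through the finite sum — yields $\sum_{i=1}^{d_E^*} (t_i/T) \cdot \min(M_I, \max(N_I - 1, 0)) = \min(M_I, \max(N_I - 1, 0))$, since the weights sum to one and each term contributes the same DoF value. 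This establishes the claimed pre-log factor.

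One technical point worth making explicit: each sub-problem of (P3) is exactly an instance of (P2) with $\mv S_E^*$ replaced by $\mv S_{E,i}^\star$, so the equivalence to a point-to-point MIMO rate maximization over the effective channel $(\mv F \mv S_{E,i}^\star \mv F^H + \sigma^2 \mv I)^{-1/2}\mv H$ carries over verbatim, and Proposition \ref{proposition:DoFl} (whose proof is stated for a generic rank-$d_E^*$ energy covariance) applies with $d_E^*$ there set to $1$. The only mild obstacle is the independence bookkeeping: one must note that each eigenvector $\mv w_i^*$ of $\mv S_E^*$ is a deterministic function of the channels $\{\mv G_k\}$, which are independent of $\mv F$ and $\mv H$, so $\mv F \mv w_i^* \mv w_i^{*H} \mv F^H$ generically has rank exactly $\min(N_I, 1) = 1$ and the interference-free subspace at the IR has dimension exactly $\max(N_I - 1, 0)$ in every sub-block. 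Beyond this, the argument is a direct reduction to Proposition \ref{proposition:DoFl} followed by a trivial convex-combination step, so I would keep the proof short.
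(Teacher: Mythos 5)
Your proof is correct and takes essentially the same route as the paper: the paper's one-line proof simply invokes Proposition \ref{proposition:DoFl} with the rank-one energy covariance in each sub-block (the paper writes $\mathrm{rank}(\mv S_{I,i}^\star)=1$, evidently a typo for $\mathrm{rank}(\mv S_{E,i}^\star)=1$), which is exactly your per-sub-block reduction. Your explicit treatment of the time-sharing average with fixed weights $t_i/T$ summing to one, and of the independence of $\mv w_i^*$ (a function of the $\mv G_k$'s) from $\mv F$ and $\mv H$, merely spells out details the paper leaves implicit.
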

\begin{proof}
This proposition follows from Proposition \ref{proposition:DoFl} together with $\mathrm{rank}(\mv S_{I,i}^\star) = 1, \forall i \in\{1,\ldots, {d_E^*}\}$.
\end{proof}

By comparing Propositions \ref{proposition:DoF:SB} and \ref{proposition:DoFl}, it is evident that as long as $d_E^* > 1$, $M_I > 1$ and $N_I > 1$, the WIT system under the single-beam WET scheme here can achieve higher DoF than that under the multi-beam WET scheme in the previous section. Nevertheless, when $M_I  = 1$ or $N_I = 1$, there is no DoF gain for the alternative design with single-beam WET. Despite this, we show in the following proposition that under the general case with any arbitrary number of $M_I$ and $N_I$ (including $M_I  = 1$ or $N_I = 1$) and any transmit power values of $P_I$ and $P_E$, the design with single-beam WET here is still beneficial over the previous design with multi-beam WET, in terms of the achievable rate of the coexisting WIT system.

\begin{proposition}\label{proposition:comparison}
It follows that $\frac{1}{T} \sum_{i=1}^{d_E^*} t_i R(\mv S^{\star}_{E,i},\mv S_{I,i}^\star) \ge R(\mv S^{*}_{E},\mv S_{I}^{*})$.
\end{proposition}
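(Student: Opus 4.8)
The plan is to show that the single-beam time-sharing scheme does at least as well as the multi-beam scheme by exhibiting, for each sub-block $i$, a feasible information covariance that already achieves rate at least $R(\mv S_E^*, \mv S_I^*)$, so that the time-averaged optimum $\frac{1}{T}\sum_i t_i R(\mv S_{E,i}^\star, \mv S_{I,i}^\star)$ dominates $R(\mv S_E^*,\mv S_I^*)$. The natural candidate is to feed the \emph{same} $\mv S_I^*$ into every sub-block: since $\mathrm{tr}(\mv S_I^*)\le P_I$, this is feasible for each sub-block of (P3), so $R(\mv S_{E,i}^\star, \mv S_{I,i}^\star)\ge R(\mv S_{E,i}^\star,\mv S_I^*)$ for every $i$. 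It then suffices to prove the convexity-type inequality $\frac{1}{T}\sum_{i=1}^{d_E^*} t_i R(\mv S_{E,i}^\star,\mv S_I^*)\ge R(\mv S_E^*,\mv S_I^*)$, i.e.\ that the average (over sub-blocks) of the rates under the single-beam interference covariances is at least the rate under the averaged interference covariance $\mv S_E^* = \frac{1}{T}\sum_i t_i \mv S_{E,i}^\star$ (recall $t_i/T=\gamma_i^*/P_E$ and $\mv S_{E,i}^\star = P_E\mv w_i^*\mv w_i^{*H}$, so $\frac{1}{T}\sum_i t_i\mv S_{E,i}^\star=\sum_i\gamma_i^*\mv w_i^*\mv w_i^{*H}=\mv S_E^*$).

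The key step is therefore a Jensen-type argument exploiting convexity of $R(\mv S_E,\mv S_I)$ in the interference covariance $\mv S_E$ for fixed $\mv S_I$. Concretely, I would fix $\mv S_I=\mv S_I^*$ and consider the map
\begin{align}
\mv S_E \longmapsto \log_2\det\!\left(\mv I + \left(\mv F\mv S_E\mv F^H + \sigma^2\mv I\right)^{-1}\mv H\mv S_I^*\mv H^H\right),\nonumber
\end{align}
which can be rewritten as $\log_2\det(\mv F\mv S_E\mv F^H+\sigma^2\mv I+\mv H\mv S_I^*\mv H^H)-\log_2\det(\mv F\mv S_E\mv F^H+\sigma^2\mv I)$. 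Both terms are of the form $\log\det(\mv A + \mv B\mv S_E\mv B^H + \mv C)$, and it is a standard fact that $\mv X\mapsto \log\det(\mv X)$ is concave on the positive-definite cone while $\mv X\mapsto -\log\det(\mv X)$ is convex; composing with the affine map $\mv S_E\mapsto \mv F\mv S_E\mv F^H+(\cdot)$ preserves convexity/concavity. The difference of a concave and a concave function is not automatically convex, so the cleaner route is to invoke the known result (e.g.\ as used in related spectrum-sharing/interference work) that the mutual information $\log_2\det(\mv I+(\mv F\mv S_E\mv F^H+\sigma^2\mv I)^{-1}\mv H\mv S_I\mv H^H)$ is \emph{convex} in $\mv S_E$ for fixed $\mv S_I\succeq\mv 0$ — intuitively, adding more interference always hurts, but with diminishing marginal harm. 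Granting this, Jensen's inequality applied to the weights $t_i/T$ (which sum to one) gives exactly
\begin{align}
\frac{1}{T}\sum_{i=1}^{d_E^*} t_i R(\mv S_{E,i}^\star,\mv S_I^*) \ \ge\ R\!\left(\textstyle\frac{1}{T}\sum_{i=1}^{d_E^*} t_i \mv S_{E,i}^\star,\ \mv S_I^*\right) = R(\mv S_E^*,\mv S_I^*).\nonumber
\end{align}
Chaining this with the feasibility inequality $R(\mv S_{E,i}^\star,\mv S_{I,i}^\star)\ge R(\mv S_{E,i}^\star,\mv S_I^*)$ completes the argument.

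The main obstacle is establishing the convexity of $R(\mv S_E,\mv S_I^*)$ in $\mv S_E$ rigorously, since it is a difference of two $\log\det$ expressions and one must argue carefully that the quadratic/Hessian terms cancel favorably; if an off-the-shelf citation is not available, I would fall back on differentiating twice along a line segment $\mv S_E(\theta)=\mv S_E^{(0)}+\theta\Delta$, writing the second derivative of $R$ as the difference of two terms $\mathrm{tr}\big[(\cdot)^{-1}\mv F\Delta\mv F^H(\cdot)^{-1}\mv F\Delta\mv F^H\big]$ evaluated at $\mv F\mv S_E\mv F^H+\sigma^2\mv I$ versus at $\mv F\mv S_E\mv F^H+\sigma^2\mv I+\mv H\mv S_I^*\mv H^H$, and using the operator monotonicity/antitonicity of the inverse to conclude the second derivative is nonnegative. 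Everything else — feasibility of the common $\mv S_I^*$ across sub-blocks, the identity $\sum_i t_i\mv S_{E,i}^\star/T=\mv S_E^*$, and the final chaining — is routine.
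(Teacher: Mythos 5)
Your proposal is correct and follows essentially the same route as the paper's proof: the identity $\frac{1}{T}\sum_i t_i \mv S_{E,i}^\star = \mv S_E^*$, Jensen's inequality using convexity of $R(\mv S_E,\mv S_I)$ in $\mv S_E$ for fixed $\mv S_I$, and then the optimality of $\{\mv S_{I,i}^\star\}$ for (P3) (equivalently, feasibility of reusing $\mv S_I^*$ in every sub-block). The only difference is that you also sketch a second-derivative argument for the convexity step, which the paper simply asserts; that sketch is sound and completes a detail the paper leaves out.
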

\begin{proof}
Note that it can be verified that $\frac{1}{T} \sum_{i=1}^{d_E^*} t_i \mv S^{\star}_{E,i} = \mv S^{*}_{E}$ via some simple manipulations. Then, it follows that $\frac{1}{T} \sum_{i=1}^{d_E^*} t_i R(\mv S^{\star}_{E,i},\mv S_{I}^*) \ge R(\mv S^{*}_{E},\mv S_{I}^{*})$, since it can be shown that $R(\mv S_{E},\mv S_{I})$ is a convex function with respect to $\mv S_{E}$ under given $\mv S_I$. In addition, we have $\frac{1}{T} \sum_{i=1}^{d_E^*} t_i R(\mv S^{\star}_{E,i},\mv S_{I,i}^\star) \ge \frac{1}{T} \sum_{i=1}^{d_E^*} t_i R(\mv S^{\star}_{E,i},\mv S_{I}^*)$, since $\{\mv S_{I,i}^\star\}$ is optimal for problem (P3). By combining the above two facts, this proposition is verified.
\end{proof}

\section{Numerical Results}\label{sec:Numerical}

\begin{table}[!t]\scriptsize
\caption{Results on The Number of Energy Beams $d_E^*$}
\label{table1} \centering
\begin{tabular}{|p{0.5in}|p{0.5in}|p{0.4in}|p{0.4in}|p{0.4in}|}
\hline
&{$d_E^*=1$}& {$d_E^*=2$}&{$d_E^*=3$}&{{ $d_E^*=4$}} \\ \hline
{$K=10$}&{251}& {747}  & {2}   & {{0} }\\ \hline
{$K=20$}&{2}& {679}  & {319}   & {{0} }\\ \hline
{$K=40$}&{0}& {   141}  & {   823}   & {36} \\ \hline
\end{tabular}
\end{table}

In this section, we provide simulation results to validate our studies above. We assume that the ERs are located at an identical distance of 5 meters from the ET, for which the average path loss from the ET to each ER is 40 dB; while the distances from the IT and the ET to the IR are the same of 30 meters, for which the average path loss are both 80 dB. Rican fading channel models are considered for the MIMO links from the ET to each ER \cite{LiuZhangChia}, while Rayleigh fading channel models are used for the other links. We set the number of transmit antennas at the ET as $M_E=4$, the number of receive antennas at each ER as $N_E=1$, the number of transmit antennas at the IT as $M_I = 4$, the energy harvesting efficiency at each ER as $\eta = 50\%$, and the noise power at the IR as $\sigma^2 = -70$ dBm. We also consider that $\alpha_1 = \cdots=\alpha_K = 1/K$, such that each ER can harvest the same amount of energy.

First, consider the WET system. Table \ref{table1} shows the number of the energy beams $d_E^*$ obtained from the optimal solution to (P1), where 1000 random channel realizations are considered and the transmit power at the ET is set as $P_E=30$ dBm ($1$ W). It is observed that as the number of ERs $K$ increases, generally more energy beams are required to balance the energy fairness among ERs. For example, in 823 among the 1000 realizations, we have $d_E^*=3$ when $K=40$. In addition, the average harvested energy at each ER is computed to be $\Theta^* = 0.0571$ mW, $0.0429$ mW, and $0.0349$ mW in the cases with $K=10, 20$, and $40$, respectively. This shows that a larger $K$ value results in less harvested energy at each individual ER, since in this case the ET needs to more uniformly distribute its transmit power to these ERs (i.e., less energy beamforming gain is achievable).

\begin{figure}
\centering
 \epsfxsize=1\linewidth
    \includegraphics[width=10cm]{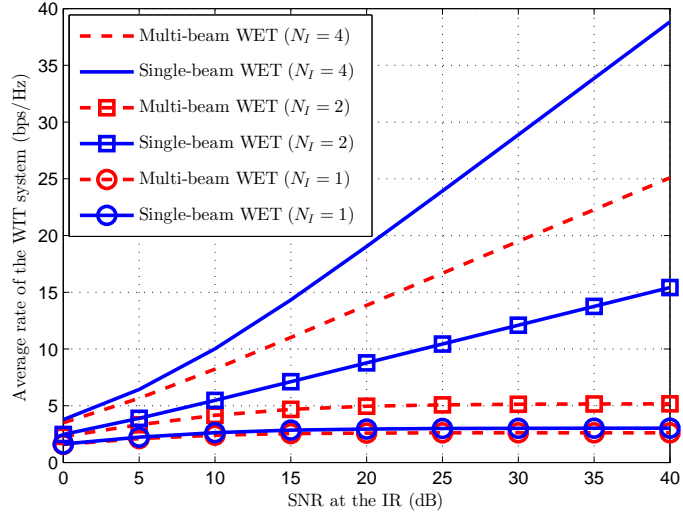}
\caption{The average rate of the WIT system versus the SNR at the IR.} \label{fig:3}
\end{figure}

Next, consider the coexisting WIT system. Fig. \ref{fig:3} shows its average rate versus the SNR at the IR with $P_E = P_I$ and $K=20$, where the SNR (in dB) is defined as $\mathrm{SNR} = P_I-80$dB$- {\sigma^2} = P_I-10$dBm. It is observed that in the cases with $N_I=2$ and $N_I=4$, the WIT system under the single-beam WET scheme achieves higher DoF than that under the multi-beam WET scheme. This is expected and can be explained based on Propositions \ref{proposition:DoFl} and \ref{proposition:DoF:SB}, provided that the number of energy beams employed at the ET $d_E^*$ is normally larger than one in the case of $K=20$ (cf. Table \ref{table1}). When $N_I=1$, although the WIT system becomes interference limited (with the DoF being zero) under both WET schemes, the one with the single-beam WET scheme is still observed to have a higher average rate than that with the multi-beam WET scheme. This is consistent with our analysis in Proposition \ref{proposition:comparison}.

\section{Conclusion}

This letter investigates spectrum sharing between a multiuser MIMO WET system and a point-to-point MIMO WIT system. In such a scenario, the conventional multi-beam design in the WET system causes high-rank interference and leads to severe performance degradation at the coexisting WIT system. To address this issue, we propose a new single-beam WET scheme, where the ET sends only one energy beam with adjustable weights over time. This new design achieves the same optimal performance for the WET system and significantly reduces the harmful interference to the WIT system as compared to the multi-beam WET scheme. Our results provide new insights on the energy beamforming design in MIMO WET systems and minimizing their adverse impact to coexisting opportunistic communications.

\end{document}